\DeclareSymbolFont{AMSb}{U}{msb}{m}{n}
\DeclareMathSymbol{\bbbc}{\mathalpha}{AMSb}{"43}
\DeclareMathSymbol{\bbbd}{\mathalpha}{AMSb}{"44}
\DeclareMathSymbol{\bbbn}{\mathalpha}{AMSb}{"4E}
\DeclareMathSymbol{\bbbo}{\mathalpha}{AMSb}{"4F}
\DeclareMathSymbol{\bbbp}{\mathalpha}{AMSb}{"50}
\DeclareMathSymbol{\bbbq}{\mathalpha}{AMSb}{"51}
\DeclareMathSymbol{\bbbr}{\mathalpha}{AMSb}{"52}
\DeclareMathSymbol{\bbbt}{\mathalpha}{AMSb}{"54}
\DeclareMathSymbol{\bbbz}{\mathalpha}{AMSb}{"5A}
\newtheorem{Proposition}{Proposition}
\newtheorem{Lemma}{Lemma}
\newtheorem{Corollary}{Corollary}
\DeclareMathSymbol{\B}{\mathalpha}{AMSb}{"42}
\DeclareMathSymbol{\C}{\mathalpha}{AMSb}{"43}
\DeclareMathSymbol{\D}{\mathalpha}{AMSb}{"44}
\DeclareMathSymbol{\E}{\mathalpha}{AMSb}{"45}
\DeclareMathSymbol{\F}{\mathalpha}{AMSb}{"46}
\DeclareMathSymbol{\N}{\mathalpha}{AMSb}{"4E}
\DeclareMathSymbol{\Q}{\mathalpha}{AMSb} {"51}
\DeclareMathSymbol{\R}{\mathalpha}{AMSb}{"52}
\DeclareMathSymbol{\T}{\mathalpha}{AMSb}{"54}
\DeclareMathSymbol{\Z}{\mathalpha}{AMSb}{"5A}
\begin{document}
\sffamily
\title{The Coxeter symmetries of high-level general sequential computation in the invertible-digital and quantum domains} \vspace{1em} 
\author{H.E. Bez\\\\  Department of Computer Science \\ Loughborough University \\ Leicestershire LE11 3TU
\\e-mail: 
\\
h.e.bez@lboro.ac.uk   
\\
\;helmut.e.bez@gmail.com}
\date{}
\maketitle

\newpage
\begin{abstract}
\noindent  The article investigates high-level general invertible-sequential processing in the digital and quantum domains. In particular it is shown that:
\begin{enumerate}
\item invertible digital-sequential processes, constructed using a standard general-inversion procedure, and 
\item sequential quantum processes, 
\end{enumerate}
determine Coxeter groups. In each case the groups are seen to define all processes that may be constructed from the, given, elemental processes of the sequences. Explicit forms of the presentations of the Coxeter groups are given for all cases. The quantum processes are seen to define unitary representations of the associated Coxeter groups in tensor-product qubit
spaces.    
\end{abstract}
\section{Introduction and overview} \label{Sect1}
Using the well-known (see, for example, \cite{NielsonChuang}) general invertible representation 
\[
\hat{f}(x, y) = (x, y \oplus f(x)), \text{    where  } \oplus \text{  denotes the Boolean exclusive-or operator}
\]
of a Boolean function $f$, a process is defined to step-wise invert a general non-invertible sequence of Boolean functions $f_n \circ f_{n-1} \circ \cdots \circ f_1$. Each of the invertible component functions, which in the general case are denoted by $\phi_i$, of the invertible sequence $\phi_n \circ \phi_{n-1} \circ \cdots \circ \phi_1$ so constructed, is an involution - and a Coxeter group is defined by $\{\phi_1, \ldots, \phi_n\}$. For a $2$-step sequence, i.e., the case $n=2$, the group is shown to be isomorphic to the $8$-element dihedral group $D_8$. For $n \geq 3$ more general Coxeter groups are generated and their presentations are given for all cases.

\vspace{1em}
In the quantum computing case it is shown, for example, that the quantisation of a $2$-step digital sequence $g(f(x))$, where $f : \B^n \rightarrow \B^m$ and $g : \B^m \rightarrow \B^k$, leads to a linear representation of $D_8$ in the unitary group $U(\otimes^{n+m+k} \C^2)$ of the tensor-product space $\otimes^{n+m+k} \C^2$.  
\section{Non-invertible sequential digital processing} \label{Sect2}
The output of a sequential digital process depends on the initial input and the output of a sequence of previous computational steps. A $j$-step process of this type may therefore be specified as follows:
\begin{itemize}
\item pre-process state: $x \in \B^n$
\item post-process state: $(f_j \circ f_{j-1} \circ \cdots \circ f_2 \circ f_1)(x)$ for some functions $f_1, f_2, \ldots, f_j$.
\end{itemize}
In general such computations are not invertible; i.e., the pre-process state cannot be determined from the post-process state. It is implicit in the above that the functional composition $f_1 \circ f_2 \circ \cdots \circ f_k$ is well-defined. A coded representation of the above may be expressed as:
\[
\begin{array}{llllll}
x \leftarrow (x_1, \ldots, x_n)   \;\;\;\;\;\;\; \text{/* initialisation of the pre-process state */}
\\
\textbf{begin}
\\
 \;\; y_1 \leftarrow f_1(x)
\\
 \;\; y_2 \leftarrow f_2(y_1)
\\
\;\;\; \vdots
\\
\;\; y_j \leftarrow f_j(y_{j-1})
\\
\textbf{end}
\end{array}
\]
\section{Invertible sequential digital processing} \label{Sect3}
\subsection{The general-inversion process}  \label{ga2} 
For $f : \B^n \rightarrow \B^m$ and $g : \B^m \rightarrow \B^k$, the functional composition $g \circ f$ is defined, but as neither $f$ nor $g$ is invertible the function $g \circ f$ will not be invertible. The general-invertible representations of $f$ and $g$ are:
\begin{eqnarray*}
\hat{f}(x, y) &=& (x, y \oplus f(x))
\\
\hat{g}(y, p) &=& (y, p \oplus g(y))
\end{eqnarray*}  
where $\hat{f} : \B^{n+m} \rightarrow \B^{n+m}, \hat{g} : \B^{m+k} \rightarrow \B^{m+k}, x \in \B^n, y \in \B^m, p \in \B^k$. The functions $\hat{f}$ and $\hat{g}$ are invertible representations of $f$ and $g$ but $\hat{g} \circ \hat{f}$ is not defined and is not therefore representative of $2$-step invertible sequential digital processing. 
\subsection{A functional extension of the general-invertible process}  \label{ga4} 
Although $\hat{f}$ and $\hat{g}$ represent individual invertible digital computations we have seen above that their functional composition is, in general, undefined. For sequential processing, functions that combine under the functional composition operator $\circ$ are required. 

\vspace{1em}
One way of accomplishing an appropriate functional composition of $f$ and $g$ is to extend the $\hat{f}$ and $\hat{g}$ functions by Cartesian product. This process extends their domains and enables a functional composition, representative of invertible $2$-step digital sequential processes, to be defined. We denote the extended functions by $\phi$ and $\gamma$ and define them below. The generalisation to $n$-step invertible digital sequences is then clear.

\vspace{1em}
We define for $f$ and $g$ where $f : \B^n \rightarrow \B^m$ and $g : \B^m \rightarrow \B^k$ the invertible functions
\begin{eqnarray*}
\phi &=& \hat{f} \times e^k
\\
\gamma &=& e^n \times \hat{g} 
\end{eqnarray*}
where $e^q : \B^q \rightarrow \B^q$, $q = 1,2,  \ldots$, are identity functions. We have $\phi: \B^{n+m+k} \rightarrow \B^{n+m+k}$, $\gamma: \B^{n+m+k} \rightarrow \B^{n+m+k}$ with:
\begin{eqnarray*}
\phi(x,y, z) &=& \hat{f} \times e^k(x,y, z)
\\
&=& (\hat{f}(x,y), z)
\\
&=& (x, y \oplus f(x), z)
\end{eqnarray*}
and
\begin{eqnarray*}
\gamma(x,y,z) &=& e^n \times \hat{g}(x,y,z)
\\
&=& (x, \hat{g}(y,z))
\\
&=& (x, y, z \oplus g(y))
\end{eqnarray*}
from which we obtain
\begin{eqnarray*}
\gamma \circ \phi(x,y,z) &=& \gamma (x, y \oplus f(x), z)
\\
&=& (x, y \oplus f(x), z \oplus g(y \oplus f(x)))
\end{eqnarray*}
and
\[
\gamma \circ \phi(x, 0, 0) = (x, \; f(x), \; g(f(x))).
\]
Hence the result of the non-invertible sequential digital process, $x \rightarrow g(f(x))$, may be obtained, by invertible means, from the 3rd component of $\gamma \circ \phi(x, 0, 0)$.
\section{The Coxeter group generated by $2$-step invertible digital sequences} \label{Sect4}
Throughout we denote by $G(\B^p)$ the group, under functional composition, of invertible Boolean functions from $\B^p$ to $\B^p$. 
\subsection{Properties of the extended functions $\phi$ and $\gamma$}
\begin{Lemma} \label{FGL1}
The extended functions $\phi = \hat{f} \times e^k$ and $\gamma = e^n \times \hat{g}$ are both self-inverse.
\end{Lemma}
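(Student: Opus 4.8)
The plan is to show directly that $\phi \circ \phi$ and $\gamma \circ \gamma$ both equal the identity function on $\B^{n+m+k}$. Since $\phi$ and $\gamma$ are already known to be invertible (they are Cartesian products of the invertible functions $\hat{f}$, $\hat{g}$ with identity functions), it suffices to verify that each function is its own inverse, i.e. that applying it twice returns every input unchanged. This reduces the claim to a pair of routine computations using the explicit formulas already derived in Section~\ref{ga4}.

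First I would treat $\phi$. Starting from the established formula $\phi(x,y,z) = (x,\, y \oplus f(x),\, z)$, I would apply $\phi$ a second time to the output. The key observation is that the first component is preserved by $\phi$, so $f(x)$ is unchanged on the second application; the third component is also untouched. The only nontrivial component is the second, where the computation yields $(y \oplus f(x)) \oplus f(x)$. The crux is the involutive property of the Boolean exclusive-or operator, namely that $a \oplus a = 0$ and hence $(y \oplus f(x)) \oplus f(x) = y \oplus (f(x) \oplus f(x)) = y \oplus 0 = y$, using associativity of $\oplus$. This returns the original triple $(x,y,z)$, establishing $\phi \circ \phi = e^{n+m+k}$.

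The argument for $\gamma$ is entirely parallel. Using $\gamma(x,y,z) = (x,\, y,\, z \oplus g(y))$, the first and second components are fixed by $\gamma$, so $g(y)$ is recomputed identically on the second pass, and the third component becomes $(z \oplus g(y)) \oplus g(y) = z$ by the same involution law for $\oplus$. Hence $\gamma \circ \gamma = e^{n+m+k}$ as well.

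I do not anticipate a genuine obstacle here: the result rests entirely on the self-inverse nature of $\oplus$ and the fact that the "control" arguments $x$ (for $\phi$) and $y$ (for $\gamma$) are passed through unchanged, so the function value added in the first step is exactly the value subtracted in the second. The only point requiring mild care is to confirm that the relevant input coordinate is indeed preserved by each map before the second application — this is what guarantees the cancellation is with the same Boolean vector rather than a different one. Once that is noted, both verifications are immediate from the explicit formulas.
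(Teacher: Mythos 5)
Your proposal is correct and follows essentially the same route as the paper: apply the explicit formula for $\phi$ twice, cancel $f(x)$ against itself using $a \oplus a = 0$, and argue symmetrically for $\gamma$ (the paper simply writes ``Similarly for $\gamma$''). Your additional remark that the control coordinate must be preserved between the two applications is a useful explicit observation, but it is the same argument.
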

\begin{proof}
We have $\phi(x,y,z) = (x, y \oplus f(x), z)$ hence
\begin{eqnarray*}
\phi \circ \phi(x,y,z) &=& \phi(x, y \oplus f(x), z)
\\
&=& (x, (y \oplus f(x)) \oplus f(x), z)
\\
&=& (x, y, z)
\end{eqnarray*}
Similarly for $\gamma$.
\end{proof}
Hence $\phi \in G(\B^{n+m+k})$ and $\gamma \in G(\B^{n+m+k})$. It follows that the functional compositions $\phi \circ \gamma$ and $\gamma \circ \phi$ are also elements of the group $G(\B^{n+m+k})$; their properties, which are required to establish the main results of this article, are discussed below. 
\subsection{Properties of the composite functions $\gamma \circ \phi$ and $\phi \circ \gamma$} 
\begin{Lemma} \label{FGL2}
Each is the inverse of the other, i.e., $(\gamma \circ \phi)^{-1} = \phi \circ \gamma$ and $(\phi \circ \gamma)^{-1} = \gamma \circ \phi$.
\end{Lemma}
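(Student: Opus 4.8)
The plan is to prove Lemma~\ref{FGL2} directly from Lemma~\ref{FGL1}, using only the fact that $\phi$ and $\gamma$ are involutions together with the standard identity $(\alpha\circ\beta)^{-1}=\beta^{-1}\circ\alpha^{-1}$ for invertible functions. Since Lemma~\ref{FGL1} already establishes $\phi\circ\phi = \mathrm{id}$ and $\gamma\circ\gamma = \mathrm{id}$ on $\B^{n+m+k}$, we have $\phi^{-1}=\phi$ and $\gamma^{-1}=\gamma$, so the inverse of a composite can be rewritten entirely in terms of $\phi$ and $\gamma$ themselves.

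\medskip
First I would recall that both $\phi$ and $\gamma$ lie in the group $G(\B^{n+m+k})$, as noted immediately after Lemma~\ref{FGL1}, so their composites are genuine group elements and inverses are well-defined and unique. Then I would compute
\begin{eqnarray*}
(\gamma\circ\phi)^{-1} &=& \phi^{-1}\circ\gamma^{-1}
\\
&=& \phi\circ\gamma,
\end{eqnarray*}
where the first equality is the reversal-of-composition rule for inverses and the second substitutes $\phi^{-1}=\phi$ and $\gamma^{-1}=\gamma$ from Lemma~\ref{FGL1}. The second claimed identity follows by the symmetric argument, swapping the roles of $\phi$ and $\gamma$:
\begin{eqnarray*}
(\phi\circ\gamma)^{-1} &=& \gamma^{-1}\circ\phi^{-1}
\\
&=& \gamma\circ\phi.
\end{eqnarray*}
Alternatively, I would note that the two statements are logically equivalent: each asserts that $\gamma\circ\phi$ and $\phi\circ\gamma$ are mutual inverses, so once the first is shown the second is immediate by taking inverses of both sides and using that $(\alpha^{-1})^{-1}=\alpha$.

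\medskip
As a sanity check, and to keep the argument self-contained, I would verify the composite directly: $(\phi\circ\gamma)\circ(\gamma\circ\phi) = \phi\circ(\gamma\circ\gamma)\circ\phi = \phi\circ\phi = \mathrm{id}$, using associativity and Lemma~\ref{FGL1} twice, with the analogous computation for $(\gamma\circ\phi)\circ(\phi\circ\gamma)$. This confirms the inverse relationship without appealing to the abstract reversal rule and would serve as an explicit, elementary derivation.

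\medskip
I do not expect any genuine obstacle here: the result is a formal consequence of the involution property established in Lemma~\ref{FGL1} and the elementary group-theoretic fact that inverting a product reverses the order of factors. The only point requiring any care is to invoke the reversal rule correctly and to substitute $\phi^{-1}=\phi$, $\gamma^{-1}=\gamma$ rather than conflating the two composites; this is purely bookkeeping. The lemma is best read as a preparatory observation whose real payoff comes later, when the relations among $\phi$, $\gamma$, and their composites are assembled into the Coxeter-group presentation.
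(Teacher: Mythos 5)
Your proof is correct and follows exactly the route the paper takes: the paper's proof simply states that the lemma ``follows trivially from the fact that $\phi$ and $\gamma$ are both self-inverse (Lemma~\ref{FGL1})'', and your argument via the reversal rule $(\alpha\circ\beta)^{-1}=\beta^{-1}\circ\alpha^{-1}$ together with $\phi^{-1}=\phi$, $\gamma^{-1}=\gamma$ is precisely that trivial deduction spelled out. The direct verification $(\phi\circ\gamma)\circ(\gamma\circ\phi)=\mathrm{id}$ you include is a nice explicit confirmation but adds nothing beyond what the paper intends.
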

\begin{proof}
Follows trivially from the fact that $\phi$ and $\gamma$ are both self-inverse - Lemma \ref{FGL1}.  
\end{proof}
\begin{Lemma}  \label{FGL3}  
$\gamma \circ \phi(x,y,z) =  (x, y \oplus f(x), z \oplus g(y \oplus f(x) ))$.
\end{Lemma}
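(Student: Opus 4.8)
The plan is to compute $\gamma \circ \phi$ directly from the definitions of the two extended functions, exactly as was already done in the expository material of Section~\ref{ga4}, and simply record the outcome as a lemma for later reference. First I would apply $\phi$ to an arbitrary triple $(x,y,z) \in \B^{n+m+k}$, using its established action $\phi(x,y,z) = (x, y \oplus f(x), z)$, which leaves the first and third components untouched while adjusting the middle component by $f(x)$.

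Next I would feed the resulting triple $(x,\, y \oplus f(x),\, z)$ into $\gamma$. Since $\gamma(a,b,c) = (a, b, c \oplus g(b))$ acts as the identity on the first two components and modifies the third by $g$ of the second component, substituting $b = y \oplus f(x)$ yields the third component $z \oplus g(y \oplus f(x))$. Collecting the three components then gives precisely the claimed expression
\[
\gamma \circ \phi(x,y,z) = \bigl(x,\; y \oplus f(x),\; z \oplus g(y \oplus f(x))\bigr).
\]

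Since both $\phi$ and $\gamma$ are already shown to be well-defined elements of $G(\B^{n+m+k})$ by Lemma~\ref{FGL1}, there is no subtlety about composability, and the only work is the substitution just described. I expect no genuine obstacle here: the statement is an immediate restatement of the computation performed when $\gamma \circ \phi$ was introduced, and the sole point requiring a moment's care is ensuring that the argument passed to $g$ is the \emph{modified} middle component $y \oplus f(x)$ rather than the original $y$ — a consequence of applying $\phi$ before $\gamma$.
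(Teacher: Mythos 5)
Your proposal is correct and matches the paper's approach exactly: the paper's proof simply points back to Section~\ref{ga4}, where the identical two-step substitution (apply $\phi$, then feed the modified middle component $y \oplus f(x)$ into $g$ via $\gamma$) is carried out. Your additional remark about composability being unproblematic is a harmless elaboration.
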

\begin{proof}
See section \ref{ga4} 
\end{proof}
\begin{Lemma} \label{FGL4} 
$\phi \circ \gamma \circ \phi(x,y,z) = (x, y, z \oplus g(y \oplus f(x)))$.
\end{Lemma}
\begin{proof}
From Lemma \ref{FGL3} we obtain:
\begin{eqnarray*}
\phi \circ \gamma \circ \phi(x,y,z)  &=& \phi (x, y \oplus f(x), z \oplus g(y \oplus f(x) ))
\\
&=& (x, y \oplus f(x) \oplus f(x), z \oplus g(y \oplus f(x)))
\\
&=& (x, y, z \oplus g(y \oplus f(x))).
\end{eqnarray*}
\end{proof}
\begin{Lemma} \label{FGL5} 
$(\gamma \circ \phi)^2 = (x, y, z \oplus g(y \oplus f(x)) \oplus g(y))$.
\end{Lemma}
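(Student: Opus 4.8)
The plan is to compute $(\gamma \circ \phi)^2(x,y,z)$ directly by applying Lemma \ref{FGL3} twice, exploiting the fact that the first coordinate is fixed by $\gamma \circ \phi$ together with the involutivity of $\oplus$ (i.e.\ $a \oplus a = 0$). No new machinery is needed beyond the single-step formula already established.

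First I would record the single-step action in a convenient form: writing $(x', y', z') = \gamma \circ \phi(x,y,z)$, Lemma \ref{FGL3} gives $x' = x$, $y' = y \oplus f(x)$, and $z' = z \oplus g(y \oplus f(x))$. The decisive structural observation is that the first coordinate is left invariant, so that $f(x') = f(x)$; this is what makes the second iteration collapse cleanly. Next I would feed $(x', y', z')$ back into $\gamma \circ \phi$, again invoking Lemma \ref{FGL3}. The new second coordinate is $y' \oplus f(x') = (y \oplus f(x)) \oplus f(x) = y$, the $f(x)$ term cancelling by involutivity — this cancellation is the crux of the argument. The new third coordinate is $z' \oplus g(y' \oplus f(x')) = z \oplus g(y \oplus f(x)) \oplus g(y)$, where I have substituted the just-established identity $y' \oplus f(x') = y$ into the argument of $g$. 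Assembling the three coordinates yields exactly $(x,\, y,\, z \oplus g(y \oplus f(x)) \oplus g(y))$, as claimed in Lemma \ref{FGL5}.

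I do not expect a genuine obstacle here; the only point requiring care is the bookkeeping of the $\oplus$-cancellations, namely that $f(x)$ cancels in the second coordinate (restoring $y$) while in the third coordinate the term $g(y \oplus f(x))$ does \emph{not} cancel but instead accumulates alongside the freshly produced $g(y)$. Keeping the two applications of Lemma \ref{FGL3} clearly separated, and substituting the reduced second coordinate before simplifying the third, suffices to avoid any error.
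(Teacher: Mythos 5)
Your proof is correct, but it takes a slightly different route from the paper's. You decompose $(\gamma \circ \phi)^2$ as $(\gamma \circ \phi) \circ (\gamma \circ \phi)$ and apply the single-step formula of Lemma \ref{FGL3} twice, tracking the coordinates through the second iteration; the key cancellation $y' \oplus f(x') = (y \oplus f(x)) \oplus f(x) = y$ happens inside your second application. The paper instead decomposes $(\gamma \circ \phi)^2$ as $\gamma \circ (\phi \circ \gamma \circ \phi)$ and leans on the intermediate Lemma \ref{FGL4}, where that same cancellation has already been performed, so the proof of Lemma \ref{FGL5} reduces to a single application of $\gamma$, which merely appends $g(y)$ to the third coordinate. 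The underlying arithmetic is identical --- both hinge on $f(x) \oplus f(x) = 0$ and on the first coordinate being fixed --- but the paper's bookkeeping is spread across its chain of lemmas (each obtained from the previous by composing with one more generator, a pattern it continues through Lemma \ref{FGL9}), whereas yours is self-contained modulo Lemma \ref{FGL3}. Your version is marginally more work at this step but does not depend on Lemma \ref{FGL4}; the paper's version buys a uniform inductive structure for the whole sequence of lemmas. Both are complete and correct.
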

\begin{proof} From Lemma \ref{FGL4} we have: 
\begin{eqnarray*}
(\gamma \circ \phi)^2(x,y,z) &=& \gamma(x, y, z \oplus g(y \oplus f(x))
\\
&=& (x, y, z \oplus g(y \oplus f(x)) \oplus g(y))
\end{eqnarray*}
\end{proof}
\begin{Lemma} \label{FGL6} 
$\phi \circ (\gamma \circ \phi)^2 = (x, y \oplus f(x),  [z \oplus g(y \oplus f(x))] \oplus g(y))$.
\end{Lemma}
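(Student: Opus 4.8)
The plan is to proceed by direct substitution, chaining from the result already established in Lemma \ref{FGL5}. That lemma gives the explicit action of $(\gamma \circ \phi)^2$ on $(x,y,z)$, namely
\[
(\gamma \circ \phi)^2(x,y,z) = (x, y, z \oplus g(y \oplus f(x)) \oplus g(y)),
\]
so the only remaining work is to apply one further copy of $\phi$ on the left and read off the components.

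The key structural observation is that $\phi$ acts on a triple $(a,b,c) \in \B^{n+m+k}$ by the rule $\phi(a,b,c) = (a, b \oplus f(a), c)$, as recorded in the proof of Lemma \ref{FGL1}: it fixes the first and third components entirely and modifies only the middle component by XOR-ing $f$ of the first component into it. First I would take $(a,b,c) = (x, y, z \oplus g(y \oplus f(x)) \oplus g(y))$, i.e.\ the output of $(\gamma\circ\phi)^2$. Since the first component is $x$ and the middle component is $y$, applying $\phi$ sends the middle component to $y \oplus f(x)$ and leaves the first and third components untouched.

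Carrying this out gives
\begin{eqnarray*}
\phi \circ (\gamma \circ \phi)^2(x,y,z)
&=& \phi\bigl(x,\, y,\, z \oplus g(y \oplus f(x)) \oplus g(y)\bigr)
\\
&=& \bigl(x,\, y \oplus f(x),\, z \oplus g(y \oplus f(x)) \oplus g(y)\bigr),
\end{eqnarray*}
which is precisely the claimed identity after regrouping the third component as $[z \oplus g(y \oplus f(x))] \oplus g(y)$; associativity of $\oplus$ makes this bracketing immaterial.

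I do not anticipate any genuine obstacle here, as the argument is a one-line substitution once Lemma \ref{FGL5} is in hand. The only point requiring a moment's care is ensuring that the first component fed into $\phi$ is indeed $x$ (not some transformed value), so that the quantity XOR-ed into the middle slot is $f(x)$ rather than $f$ of something else; this is guaranteed because both $\gamma$ and $\phi$, and hence $(\gamma\circ\phi)^2$, leave the first component invariant. With that verified, the result follows immediately.
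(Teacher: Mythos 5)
Your proposal is correct and follows exactly the route the paper takes: start from the explicit formula for $(\gamma \circ \phi)^2$ in Lemma \ref{FGL5}, apply $\phi$ on the left using $\phi(a,b,c) = (a, b \oplus f(a), c)$, and read off the components. The additional remark that the first component remains $x$ throughout (so the XOR-ed term is $f(x)$) is a sensible sanity check, but the argument is otherwise identical to the paper's one-line substitution.
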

\begin{proof}
From Lemma \ref{FGL5} we have
\begin{eqnarray*}
\phi \circ (\gamma \circ \phi)^2 (x,y, z) &=& \phi (x, y, z \oplus g(y \oplus f(x)) \oplus g(y))
\\
&=& (x, y \oplus f(x), z \oplus g(y \oplus f(x)) \oplus g(y)) 
\end{eqnarray*}
\end{proof}
\begin{Lemma} \label{FGL7} 
$(\gamma \circ \phi)^3(x, y, z)=(x, y \oplus f(x), z \oplus g(y))$.
\end{Lemma}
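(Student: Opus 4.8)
The plan is to compute the cube by peeling off one application of $\gamma \circ \phi$ from the already-established square, chaining the closed forms built up in Lemmas \ref{FGL3}--\ref{FGL6}. Writing $(\gamma \circ \phi)^3 = \gamma \circ \bigl[\phi \circ (\gamma \circ \phi)^2\bigr]$, I would substitute the expression for $\phi \circ (\gamma \circ \phi)^2$ supplied by Lemma \ref{FGL6}, namely $(x, y \oplus f(x),\, z \oplus g(y \oplus f(x)) \oplus g(y))$, and then apply $\gamma$ once more using its definition $\gamma(a,b,c) = (a, b, c \oplus g(b))$.

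The one point requiring care is the argument fed to $g$ when this final $\gamma$ acts. Since $\gamma$ appends $g$ of the \emph{second} coordinate, and that coordinate is now $y \oplus f(x)$ rather than $y$, the term appended is $g(y \oplus f(x))$. This is exactly the term already present in the third coordinate, so by the involutivity of $\oplus$ (i.e.\ $t \oplus t = 0$) the two copies of $g(y \oplus f(x))$ annihilate, leaving $z \oplus g(y)$ in the third slot. The first two coordinates are untouched by $\gamma$, so they remain $x$ and $y \oplus f(x)$, yielding the claimed value $(x, y \oplus f(x), z \oplus g(y))$.

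The main (and essentially only) obstacle here is bookkeeping rather than any structural difficulty: one must resist the temptation to think $\gamma$ contributes $g(y)$, and instead track that the intervening $\phi$ has shifted the second coordinate to $y \oplus f(x)$; it is precisely this shift that engineers the cancellation. An equivalent route, should one prefer to start from the square directly, is $(\gamma \circ \phi)^3 = (\gamma \circ \phi) \circ (\gamma \circ \phi)^2$: here the output of $(\gamma \circ \phi)^2$ from Lemma \ref{FGL5} has second coordinate equal to $y$, so Lemma \ref{FGL3} applies verbatim and the same cancellation of the $g(y \oplus f(x))$ terms produces the result. Either way the computation reduces to a single substitution followed by one $\oplus$-cancellation, and no new idea beyond the preceding lemmas is required.
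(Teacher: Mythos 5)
Your proposal is correct and follows the paper's own route exactly: the paper likewise writes $(\gamma \circ \phi)^3 = \gamma \circ \bigl[\phi \circ (\gamma \circ \phi)^2\bigr]$, applies $\gamma$ to the output of Lemma \ref{FGL6}, and cancels the two copies of $g(y \oplus f(x))$ in the third coordinate. The bookkeeping point you flag (that $\gamma$ acts on the shifted second coordinate $y \oplus f(x)$) is precisely the step the paper's computation turns on.
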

\begin{proof}
Applying $\gamma$ to the outcome of Lemma \ref{FGL6} gives: 
\begin{eqnarray*}
\gamma(x, y^+, z^+) &=& (x,  y^+, z^+ \oplus g(y^+))
\\
&=& (x, y \oplus f(x),  ( [z \oplus g(y \oplus f(x))] \oplus g(y)) \oplus g(y \oplus f(x)))
\\
&=& (x, y \oplus f(x),  ( [z \oplus g(y \oplus f(x)) \oplus g(y \oplus f(x)   ] \oplus g(y)) ))
\\
&=& (x, y \oplus f(x), z \oplus g(y)).
\end{eqnarray*}
\end{proof}
\begin{Lemma} \label{FGL8} 
$\phi \circ (\gamma \circ \phi)^3 = \gamma$.
\end{Lemma}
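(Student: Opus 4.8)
The plan is to prove this by a single direct post-composition, building on the closed form already established. By Lemma \ref{FGL7} we have the explicit expression $(\gamma \circ \phi)^3(x,y,z) = (x,\, y \oplus f(x),\, z \oplus g(y))$, so the only remaining work is to apply $\phi$ to this triple and simplify. First I would substitute the output of Lemma \ref{FGL7} into the definition $\phi(a,b,c) = (a,\, b \oplus f(a),\, c)$, which leaves the first and third components untouched and replaces the middle component $b = y \oplus f(x)$ by $b \oplus f(x) = (y \oplus f(x)) \oplus f(x)$.

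The key step is then the XOR cancellation $(y \oplus f(x)) \oplus f(x) = y$, which follows from associativity of $\oplus$ together with the nilpotency identity $f(x) \oplus f(x) = 0$ — precisely the algebraic fact that already underlies the self-inverse property of $\phi$ in Lemma \ref{FGL1}. After this simplification the result is $(x,\, y,\, z \oplus g(y))$, and I would finish by observing that this triple is exactly the definition of $\gamma$ given in Section \ref{ga4}, namely $\gamma(x,y,z) = (x,\, y,\, z \oplus g(y))$. Reading the chain of equalities as an identity of functions on $\B^{n+m+k}$ then yields $\phi \circ (\gamma \circ \phi)^3 = \gamma$.

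I do not expect a genuine obstacle here, since the statement reduces to one line of Boolean arithmetic; the only point requiring care is bookkeeping — ensuring the XOR cancellation is applied to the correct (middle) component and that the third component $z \oplus g(y)$ is carried through unchanged, so that the final triple is matched against $\gamma$ rather than against $\phi \circ \gamma$ or $\gamma \circ \phi$. As an optional cross-check, one could instead argue purely group-theoretically: post-composing the relation of Lemma \ref{FGL7} is equivalent to the assertion $\phi \,(\gamma\phi)^3 = \gamma$, which (using $\phi^2 = e$ from Lemma \ref{FGL1}) rearranges to $(\gamma\phi)^4 = e$, i.e. the composite $\gamma \circ \phi$ has order dividing $4$; verifying this order relation gives an independent confirmation of the computation and also foreshadows the dihedral structure $D_8$ announced in the introduction.
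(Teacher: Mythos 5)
Your proof is correct and follows essentially the same route as the paper: apply $\phi$ to the closed form of $(\gamma \circ \phi)^3$ from Lemma \ref{FGL7} and cancel $f(x) \oplus f(x)$ in the middle component to recover $\gamma$. (The paper's proof cites Lemma \ref{FGL8} where it means Lemma \ref{FGL7}, a typo your version implicitly corrects.)
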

\begin{proof}
Pre-multiplying the result of Lemma \ref{FGL8} by $\phi$ we obtain:
\begin{eqnarray*}
\phi \circ (\gamma \circ \phi)^3(x, y, z) &=& \phi(x, y \oplus f(x), z \oplus g(y)) 
\\
&=& (x, y, z \oplus g(y))   
\\
&=& \gamma(x, y, z).
\end{eqnarray*}
\end{proof}
Denoting the identity function of $G(\B^{n+m+k})$ by $e$ we have:
\begin{Lemma} \label{FGL9} 
$(\gamma \circ \phi)^4 = e$, 
\end{Lemma}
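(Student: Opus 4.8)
The plan is to leverage the self-inverse property of $\phi$ (Lemma \ref{FGL1}) together with the identity $\phi \circ (\gamma \circ \phi)^3 = \gamma$ already established in Lemma \ref{FGL8}, rather than recompute the fourth power coordinate-by-coordinate. The essential observation is that one further application of $\phi$ collapses the left-hand side of Lemma \ref{FGL8} to a bare power of $\gamma \circ \phi$, after which the claim becomes a short group-theoretic manipulation.

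First I would pre-multiply the identity $\phi \circ (\gamma \circ \phi)^3 = \gamma$ by $\phi$. Since $\phi \circ \phi = e$ by Lemma \ref{FGL1}, the left-hand side simplifies and one obtains $(\gamma \circ \phi)^3 = \phi \circ \gamma$. Next I would invoke Lemma \ref{FGL2}, which records that $(\gamma \circ \phi)^{-1} = \phi \circ \gamma$. Comparing the two displays yields $(\gamma \circ \phi)^3 = (\gamma \circ \phi)^{-1}$, and composing both sides with $\gamma \circ \phi$ gives $(\gamma \circ \phi)^4 = (\gamma \circ \phi) \circ (\gamma \circ \phi)^{-1} = e$, as required. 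An independent check is available by direct computation: starting from $(\gamma \circ \phi)^3(x,y,z) = (x,\, y \oplus f(x),\, z \oplus g(y))$ from Lemma \ref{FGL7}, applying $\phi$ clears the $f(x)$ term in the middle slot and the subsequent application of $\gamma$ clears the $g(y)$ term in the last slot, returning $(x,y,z)$.

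I do not anticipate a genuine obstacle here, since the substantive work is carried out in the preceding lemmas. The only point requiring care is the bookkeeping of composition order — reading $\gamma \circ \phi$ as ``apply $\phi$, then $\gamma$'' — and invoking the involutivity relations $\phi \circ \phi = e$ and $\gamma \circ \gamma = e$ at precisely the right moment. Together with Lemma \ref{FGL9} this will pin down the order of $\gamma \circ \phi$ as exactly $4$ (the lower powers being non-trivial by Lemmas \ref{FGL5}--\ref{FGL7}), which is the fact underpinning the identification of the generated group with $D_8$.
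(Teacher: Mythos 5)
Your proposal is correct and takes essentially the same route as the paper: both derive the claim as an immediate group-theoretic consequence of Lemma \ref{FGL8} together with the involutivity of the generators. The paper simply left-composes $\phi \circ (\gamma \circ \phi)^3 = \gamma$ with $\gamma$ to get $(\gamma \circ \phi)^4 = \gamma \circ \gamma = e$ in one step, whereas you left-compose with $\phi$ and pass through Lemma \ref{FGL2}; the difference is cosmetic.
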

\begin{proof}
From Lemma \ref{FGL8} we have $(\gamma \circ \phi)^4 = \gamma \circ (\phi \circ (\gamma \circ \phi)^3) = \gamma \circ \gamma = e$.
\end{proof}
Equivalently $\gamma \circ \phi$ is of order $4$ in the group $G(\B^{n+m+k})$. Given that the inverse of $\gamma \circ \phi$ is $\phi \circ \gamma$ we have: 
\begin{Corollary}
\begin{eqnarray*}
\phi \circ \gamma 	&=& (\gamma \circ \phi)^3,
\\
(\phi \circ \gamma)^2 &=& (\gamma \circ \phi)^2, 
\\
(\phi \circ \gamma)^3 &=&  \gamma \circ \phi,
\\
(\phi \circ \gamma)^4 &=& e. 
\end{eqnarray*}
\end{Corollary}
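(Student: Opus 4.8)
The plan is to treat the entire corollary as an immediate consequence of two facts already in hand: that $\gamma \circ \phi$ has order exactly $4$ (Lemma \ref{FGL9}) and that its inverse is $\phi \circ \gamma$ (Lemma \ref{FGL2}). The underlying observation is that these two lemmas together say that $\phi \circ \gamma$ is the inverse of a generator of a cyclic group of order $4$, so every line of the corollary is really a statement about powers of a single group element, and no further computation with $f$, $g$, or the coordinate expressions is needed.

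First I would abbreviate $a = \gamma \circ \phi$. By Lemma \ref{FGL2} we have $a^{-1} = \phi \circ \gamma$, and by Lemma \ref{FGL9} we have $a^{4} = e$. The key intermediate identity is $a^{-1} = a^{3}$, which follows from $a^{4} = e$ on premultiplying by $a^{-1}$ (equivalently, from $a \circ a^{3} = a^{4} = e$). This single equation immediately gives the first line, since $\phi \circ \gamma = a^{-1} = a^{3} = (\gamma \circ \phi)^{3}$.

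Next I would raise $a^{-1}$ to successive powers and reduce each exponent modulo $4$ using $a^{4} = e$. Concretely, $(\phi \circ \gamma)^{k} = a^{-k} = a^{4-k}$ for $k = 2,3$, which yields $(\phi \circ \gamma)^{2} = a^{2} = (\gamma \circ \phi)^{2}$ and $(\phi \circ \gamma)^{3} = a^{1} = \gamma \circ \phi$. The final line follows from $(\phi \circ \gamma)^{4} = a^{-4} = (a^{4})^{-1} = e^{-1} = e$, so the four assertions are obtained in sequence.

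As for the main obstacle, there is essentially none beyond careful exponent bookkeeping, since all the genuine computational content was already discharged in Lemmas \ref{FGL1}--\ref{FGL9}. The only point requiring a moment's attention is that the reductions $a^{-k} = a^{4-k}$ rely on the order of $a$ being \emph{exactly} $4$, so I would be careful to invoke Lemma \ref{FGL9} directly rather than re-deriving order information, and to note that the same reasoning applies symmetrically because $a$ and $a^{-1}$ play interchangeable roles in a cyclic group.
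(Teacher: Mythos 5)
Your proposal is correct and matches the paper's (implicit) argument exactly: the paper states the corollary without a separate proof, deriving it directly from Lemma \ref{FGL9} ($(\gamma\circ\phi)^4=e$) and the fact that $\phi\circ\gamma=(\gamma\circ\phi)^{-1}$ from Lemma \ref{FGL2}, which is precisely your exponent bookkeeping with $a=\gamma\circ\phi$. One tiny quibble: the reductions $a^{-k}=a^{4-k}$ need only $a^4=e$ (order dividing $4$), not that the order is \emph{exactly} $4$, so your cautionary remark overstates what must be invoked.
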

and, hence: 
\begin{Proposition} \label{Prop1}
The elements $\gamma \circ \phi$ and $\phi \circ \gamma$ are of order $4$ in the group  $G(\B^{n+m+k})$.
\end{Proposition}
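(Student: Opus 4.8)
The plan is to read the Proposition as a two-sided bound on the order of each element, and to notice that one side is already established. Lemma~\ref{FGL9} gives $(\gamma\circ\phi)^4=e$, and the Corollary gives $(\phi\circ\gamma)^4=e$; hence in each case the order divides $4$ and so lies in $\{1,2,4\}$. The entire remaining content of the Proposition is therefore the lower bound: that the order is not a proper divisor of $4$. Since the order divides $4$, it suffices to rule out order $2$, because that simultaneously rules out order $1$ (if the element were the identity its square would be too). Concretely, I would show $(\gamma\circ\phi)^2\ne e$, from which the order is neither $1$ nor $2$ and must equal $4$.

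Here Lemma~\ref{FGL5} does the essential work. It gives
\[
(\gamma\circ\phi)^2(x,y,z)=(x,\,y,\,z\oplus g(y\oplus f(x))\oplus g(y)),
\]
so $(\gamma\circ\phi)^2$ is the identity exactly when $g(y\oplus f(x))=g(y)$ for all $x\in\B^n$, $y\in\B^m$. Thus a single witnessing pair $(x_0,y_0)$ with $g(y_0\oplus f(x_0))\ne g(y_0)$ forces $(\gamma\circ\phi)^2\ne e$ and hence order exactly $4$ for $\gamma\circ\phi$. For the companion element I would avoid a second computation and instead invoke the Corollary, which records $(\phi\circ\gamma)^2=(\gamma\circ\phi)^2$; the same witness then gives $(\phi\circ\gamma)^2\ne e$, and together with $(\phi\circ\gamma)^4=e$ this shows the order of $\phi\circ\gamma$ is also exactly $4$.

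The step I expect to be the genuine obstacle is precisely this lower bound, since it is the only point at which a hypothesis on $f$ and $g$ must enter. Everything preceding it (Lemmas~\ref{FGL1}--\ref{FGL9} and the Corollary) consists of identities valid for arbitrary $f,g$, but order exactly $4$ cannot hold unconditionally: if $f\equiv 0$, or more generally if $g$ is invariant under $\oplus$-translation by every value in the range of $f$, then $(\gamma\circ\phi)^2=e$ and the group generated by $\{\phi,\gamma\}$ collapses to a Klein four-group rather than $D_8$. I would therefore state the Proposition under the standing non-degeneracy assumption that $g(y\oplus f(x))\ne g(y)$ for at least one pair $(x,y)$, and flag explicitly that this condition is exactly what separates the order-$4$ (dihedral $D_8$) case treated here from the degenerate order-$2$ case.
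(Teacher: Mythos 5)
Your argument is correct and is in fact more complete than the paper's own. The paper obtains the Proposition purely from Lemma~\ref{FGL9} and the Corollary listing the powers of $\phi\circ\gamma$ --- that is, from $(\gamma\circ\phi)^4=e$ alone --- which establishes only that the order \emph{divides} $4$. You correctly isolate the missing half: a lower bound ruling out orders $1$ and $2$, which by Lemma~\ref{FGL5} reduces to exhibiting a single pair $(x_0,y_0)$ with $g(y_0\oplus f(x_0))\neq g(y_0)$, and then transfers to $\phi\circ\gamma$ via $(\phi\circ\gamma)^2=(\gamma\circ\phi)^2$. Your observation that this lower bound cannot hold unconditionally is also right: if $g$ is constant, or $f\equiv 0$ (in which case $\phi=e$ outright), then $(\gamma\circ\phi)^2=e$, the order drops to $2$ or $1$, and the group generated by $\phi$ and $\gamma$ collapses to a Klein four-group or smaller --- so the identification with $D_8$ in Lemma~\ref{FGL14} fails as well. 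The non-degeneracy hypothesis you propose is exactly what the paper needs but never states; what you have written is therefore a proof of a corrected statement rather than of the statement as printed, and that is the right call, since as printed the Proposition is false for degenerate choices of $f$ and $g$. The one refinement worth adding is that ruling out order $2$ does not by itself rule out order $1$ in general; it does here only because, as you note, an element of order $1$ would also have trivial square, so the single witness $(x_0,y_0)$ suffices for both exclusions.
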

\begin{Corollary}
$(\gamma \circ \phi)^2$ is self-inverse.
\end{Corollary}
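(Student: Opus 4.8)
The plan is to unwind the definition of \emph{self-inverse}. An element $h$ of the group $G(\B^{n+m+k})$ is self-inverse exactly when $h \circ h = e$, equivalently when $h = h^{-1}$. Setting $h = (\gamma \circ \phi)^2$, the claim reduces to showing that $(\gamma \circ \phi)^2 \circ (\gamma \circ \phi)^2 = e$, i.e.\ that $(\gamma \circ \phi)^4 = e$.

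First I would invoke Lemma \ref{FGL9}, which already establishes precisely that $(\gamma \circ \phi)^4 = e$. Associativity of $\circ$ then gives $\left((\gamma \circ \phi)^2\right)^2 = (\gamma \circ \phi)^4 = e$, which is the desired conclusion. No further computation with the component functions $f$ and $g$ is needed, since all the relevant identities have been collected in the preceding lemmas.

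Alternatively, a purely group-theoretic argument is available: by Proposition \ref{Prop1} the element $\gamma \circ \phi$ has order $4$ in $G(\B^{n+m+k})$, and in any group the square of an element of order $4$ has order $4/\gcd(4,2) = 2$. An element of order $2$ is its own inverse, so $(\gamma \circ \phi)^2$ is self-inverse.

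There is essentially no obstacle here: the statement is an immediate consequence of the order-$4$ result. The only point requiring care is the bookkeeping of exponents under associativity --- making sure that squaring $(\gamma \circ \phi)^2$ yields the fourth power rather than some other power --- but this is routine.
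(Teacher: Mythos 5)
Your proof is correct and follows exactly the paper's route: the paper's own justification is ``Immediate from Lemma \ref{FGL9}'', i.e.\ from $(\gamma \circ \phi)^4 = e$, which is precisely your main argument. The additional order-theoretic remark is a harmless restatement of the same fact.
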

\begin{proof}
Immediate from Lemma \ref{FGL9}.
\end{proof}
\begin{Corollary}
$C_{\gamma, \phi} = \{e, \gamma \circ \phi, (\gamma \circ \phi)^2, (\gamma \circ \phi)^3)\}$ is a cyclic group of order $4$.
\end{Corollary}
\begin{proof}
It follows from Lemma \ref{FGL9} that $(\gamma \circ \phi)^{-1} = (\gamma \circ \phi)^3, (\gamma \circ \phi)^2$ is self inverse, and $((\gamma \circ \phi)^3)^{-1} = \gamma \circ \phi$. Hence $C_{\gamma, \phi}$ is closed under inversion. It is clearly closed under multiplication.
\end{proof}
\begin{Corollary}
$C_{\phi, \gamma} = \{e, \phi \circ \gamma, (\phi \circ \gamma)^2, (\phi \circ \gamma)^3)\}$ is a cyclic group of order $4$.
\end{Corollary}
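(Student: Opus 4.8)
The plan is to mirror the argument just given for $C_{\gamma,\phi}$, since the two corollaries are perfectly symmetric under interchange of $\phi$ and $\gamma$. The essential input is Proposition \ref{Prop1}, which tells us that $\phi \circ \gamma$ has order $4$ in $G(\B^{n+m+k})$, together with the preceding Corollary, which records the explicit powers $(\phi \circ \gamma)^2 = (\gamma \circ \phi)^2$, $(\phi \circ \gamma)^3 = \gamma \circ \phi$, and $(\phi \circ \gamma)^4 = e$.

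First I would observe that $C_{\phi,\gamma}$ is exactly the cyclic subgroup generated by the single element $\phi \circ \gamma$; this makes the cyclicity claim immediate once the set is known to be a group. Next I would verify closure under composition: any product $(\phi \circ \gamma)^i \circ (\phi \circ \gamma)^j = (\phi \circ \gamma)^{i+j}$, and reducing the exponent $i+j$ modulo $4$ via $(\phi \circ \gamma)^4 = e$ returns one of the four listed elements, so the set is closed. Then I would check closure under inversion: from $(\phi \circ \gamma)^4 = e$ we read off $(\phi \circ \gamma)^{-1} = (\phi \circ \gamma)^3$, that $(\phi \circ \gamma)^2$ is self-inverse, and that $((\phi \circ \gamma)^3)^{-1} = \phi \circ \gamma$, so every element's inverse again lies in $C_{\phi,\gamma}$. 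The identity $e$ is present by construction.

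Finally I would confirm that the order is exactly $4$, i.e. that the four listed elements are genuinely distinct. This is where Proposition \ref{Prop1} does the real work: because $\phi \circ \gamma$ has order precisely $4$, no smaller positive power equals $e$, and hence $e$, $\phi \circ \gamma$, $(\phi \circ \gamma)^2$, $(\phi \circ \gamma)^3$ cannot coincide. I anticipate no serious obstacle here; the only point needing care is to derive distinctness from the established order of $\phi \circ \gamma$ rather than from the mere fact that four symbols have been written down.
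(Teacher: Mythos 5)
Your proposal is correct and matches the paper's intent exactly: the paper leaves this proof body empty, implicitly deferring to the symmetric argument just given for $C_{\gamma,\phi}$ (closure under inversion via $(\gamma\circ\phi)^4=e$, closure under multiplication), which is precisely what you reproduce with $\phi$ and $\gamma$ interchanged. Your additional remark that distinctness of the four elements follows from $\phi\circ\gamma$ having order exactly $4$ is a small but genuine improvement over the paper, which never explicitly justifies that the listed elements do not coincide.
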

\begin{proof}
\end{proof}
\begin{Corollary}
$C_{\phi, \gamma} = C_{\gamma, \phi}$.
\end{Corollary}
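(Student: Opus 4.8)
The plan is to reduce the claimed set equality to the inclusion $C_{\phi, \gamma} \subseteq C_{\gamma, \phi}$ together with a cardinality count, since both sets have already been established to be cyclic groups of order $4$. So rather than verifying two inclusions directly, I would establish one inclusion and then invoke finiteness.

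First I would invoke the Corollary preceding Proposition \ref{Prop1}, which identifies each power of $\phi \circ \gamma$ with a power of $\gamma \circ \phi$: explicitly $\phi \circ \gamma = (\gamma \circ \phi)^3$, $(\phi \circ \gamma)^2 = (\gamma \circ \phi)^2$, $(\phi \circ \gamma)^3 = \gamma \circ \phi$, and $(\phi \circ \gamma)^4 = e$. Reading these relations off directly shows that every element listed in $C_{\phi, \gamma} = \{e, \phi \circ \gamma, (\phi \circ \gamma)^2, (\phi \circ \gamma)^3\}$ coincides with an element of $C_{\gamma, \phi} = \{e, \gamma \circ \phi, (\gamma \circ \phi)^2, (\gamma \circ \phi)^3\}$, which gives the inclusion $C_{\phi, \gamma} \subseteq C_{\gamma, \phi}$. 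The conclusion then follows because both are cyclic groups of order $4$ by the foregoing corollaries, hence each is a set of exactly four distinct elements, and an inclusion between finite sets of equal cardinality is an equality.

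I do not expect any genuine obstacle, since the substantive work was already carried out in Lemma \ref{FGL9} and in the Corollary converting powers of $\phi \circ \gamma$ into powers of $\gamma \circ \phi$. The only point that needs a moment's care is confirming that the four elements of each cyclic group are pairwise distinct, as the cardinality count would fail otherwise; but this is precisely the content of the order-$4$ assertions already proved, because in a cyclic group of order $4$ the powers $e, a, a^2, a^3$ are necessarily distinct. Alternatively, one could avoid the cardinality step altogether by observing that $\phi \circ \gamma = (\gamma \circ \phi)^3$ is itself a generator of $C_{\gamma, \phi}$ (as $3$ is coprime to $4$), so the two cyclic groups share a generator and therefore coincide.
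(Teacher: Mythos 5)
Your proposal is correct and follows essentially the same route as the paper: both rest on the preceding Corollary's identifications $\phi \circ \gamma = (\gamma \circ \phi)^3$, $(\phi \circ \gamma)^2 = (\gamma \circ \phi)^2$, $(\phi \circ \gamma)^3 = \gamma \circ \phi$. The paper simply substitutes these into the list of elements and observes the two four-element sets are literally identical, so your additional inclusion-plus-cardinality step (and the shared-generator remark) is valid but not needed.
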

\begin{proof}
We have
\begin{eqnarray*}
C_{\phi, \gamma} &=& \{e, \phi \circ \gamma, (\phi \circ \gamma)^2, (\phi \circ \gamma)^3)\}
\\
&=& \{e, (\gamma \circ \phi)^3, (\gamma \circ \phi)^2, \gamma \circ \phi\} = C_{\gamma, \phi}.
\end{eqnarray*}
\end{proof}
\subsection{The group generated by the functions $\phi$ and $\gamma$}
The cyclic group $C_{\gamma, \phi}$ is not the largest subgroup of $G(\B^{n+m+k})$ generated by $\phi$ and $\gamma$. We have:
\begin{Lemma} \label{FGL13}
The functions $\phi = (\hat{f} \times e)$ and $\gamma = (e \times \hat{g})$ generate the $8$-element sub-group  
\[
G_{2s} = \{e, \phi, \gamma, \phi \circ \gamma, (\phi \circ \gamma)^2, (\phi \circ \gamma)^3, \phi \circ (\gamma \circ \phi), \gamma \circ (\phi \circ \gamma)\}
\]
of $G(\B^{n+m+k})$.
\end{Lemma}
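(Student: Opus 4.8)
The plan is to recognise $\langle\phi,\gamma\rangle$ as the dihedral group $D_8$ of order $8$ and then to match its elements with the functions listed in $G_{2s}$. The abstract skeleton is already in place: $\phi$ and $\gamma$ are involutions by Lemma \ref{FGL1}, and $\gamma\circ\phi$ has order $4$ by Proposition \ref{Prop1}. Putting $\rho=\phi\circ\gamma$ and $\sigma=\phi$, one has $\sigma^2=e$, $\rho^4=e$ (Lemma \ref{FGL9}) and $\sigma\rho\sigma=\phi\circ(\phi\circ\gamma)\circ\phi=\gamma\circ\phi=\rho^{-1}$, which are exactly the relations defining $D_8=\langle\rho,\sigma\mid\rho^4=\sigma^2=e,\ \sigma\rho\sigma=\rho^{-1}\rangle$. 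Hence $\rho\mapsto\phi\circ\gamma$, $\sigma\mapsto\phi$ extends to a surjective homomorphism $D_8\twoheadrightarrow\langle\phi,\gamma\rangle$, so $|\langle\phi,\gamma\rangle|$ divides $8$.

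First I would check that the eight entries of $G_{2s}$ are precisely the eight normal-form words $\sigma^a\rho^i$ ($a\in\{0,1\}$, $i\in\{0,1,2,3\}$) of $D_8$. Three entries are $\rho,\rho^2,\rho^3$ by definition; $\phi=\sigma$ and $\gamma=\sigma\rho$; and the last two unwind, using the involution property and the identities of the preceding Corollary, as $\phi\circ(\gamma\circ\phi)=\sigma\rho^{-1}=\sigma\rho^3$ and $\gamma\circ(\phi\circ\gamma)=\sigma\rho^2$. Conversely, every product of $\phi$'s and $\gamma$'s can be brought to normal form by using $\sigma\rho\sigma=\rho^{-1}$ to move all occurrences of $\sigma$ to the left and then reducing exponents via $\sigma^2=e$ and $\rho^4=e$. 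This shows at once that $\langle\phi,\gamma\rangle=G_{2s}$ as sets, that $G_{2s}$ is closed under composition, and that $|G_{2s}|\leq 8$.

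It remains to secure the lower bound $|\langle\phi,\gamma\rangle|=8$, i.e.\ to rule out a proper quotient of $D_8$. The cyclic subgroup $C_{\gamma,\phi}=\{e,\rho,\rho^2,\rho^3\}$ has order $4$ by Proposition \ref{Prop1}, so it is enough to place $\phi$ outside it. If $\phi=(\gamma\circ\phi)^k$ then $k=1,3$ forces $\gamma=e$ and $k=0,2$ forces $\phi=e$; in either case $\gamma\circ\phi$ becomes an involution, contradicting its order $4$. Thus $\phi\notin C_{\gamma,\phi}$, so $|\langle\phi,\gamma\rangle|\geq 5$, and together with divisibility by $8$ this gives exactly $8$. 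The surjection $D_8\twoheadrightarrow\langle\phi,\gamma\rangle$ is then a bijection, so the eight listed functions are distinct and $\langle\phi,\gamma\rangle\cong D_8$.

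I expect the only genuine subtlety to lie in this last step, and specifically in the hypothesis underwriting it. The reduction to $\phi\notin C_{\gamma,\phi}$ relies wholly on $\gamma\circ\phi$ having order \emph{exactly} $4$; this fails for degenerate data --- for instance $f\equiv 0$ collapses $\phi$ to $e$ and the whole group to order $2$ --- so the claim that $G_{2s}$ has eight \emph{distinct} elements tacitly requires $f$ and $g$ to be non-trivial enough to validate Proposition \ref{Prop1}. The upper bound and the closure bookkeeping, by contrast, are purely formal consequences of the dihedral relations and pose no real difficulty.
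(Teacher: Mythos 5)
Your proof is correct and, if anything, considerably more complete than the paper's own. The paper's proof of this lemma is a two-sentence assertion (``the closure \ldots\ [is] easily verified from the properties \ldots\ established above''); the actual dihedral identification is deferred to Lemma \ref{FGL14}, where the paper takes $A=\phi\circ\gamma$, $B=\gamma$ and checks $A^4=B^2=(AB)^2=e$ --- the same relations you use in the equivalent form $\sigma\rho\sigma=\rho^{-1}$ with $\sigma=\phi$, $\rho=\phi\circ\gamma$. What you add, and what the paper omits entirely, is the two-sided counting argument: the surjection from the abstractly presented $D_8$ onto $\langle\phi,\gamma\rangle$ gives closure and the upper bound $|\langle\phi,\gamma\rangle|\le 8$ for free, your matching of the eight listed functions with the normal forms $\sigma^a\rho^i$ shows the listed set is exactly the image, and the observation that $\phi\notin C_{\gamma,\phi}$ pushes the order from at least $5$ up to exactly $8$. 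The paper never argues that the eight listed functions are pairwise distinct, so your lower-bound step is genuinely new content rather than a restatement.

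Your closing caveat is not a defect of your argument but a real gap in the source. Proposition \ref{Prop1} (that $\gamma\circ\phi$ has order exactly $4$), on which your lower bound and the paper's ``$8$-element'' claim both rest, fails for degenerate data: by Lemma \ref{FGL5}, $(\gamma\circ\phi)^2(x,y,z)=(x,y,z\oplus g(y\oplus f(x))\oplus g(y))$, so whenever $g(y\oplus f(x))=g(y)$ for all $x,y$ (e.g.\ $f\equiv 0$, or $g$ constant) one gets $(\gamma\circ\phi)^2=e$ and the generated group collapses to order at most $4$. You are right that the lemma tacitly assumes such degeneracy does not occur; making that hypothesis explicit would strengthen both your write-up and the paper. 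Apart from flagging that dependence, your proof stands as written.
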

\begin{proof} 
The functions $\phi$ and $\gamma$ clearly generate the elements of $G_{2s}$. The closure of $G_{\phi, \gamma}$ under both inversion and product, are easily verified from the properties of $\phi, \gamma$, and their products, established above.
\end{proof}
\begin{Lemma} \label{FGL14}
The subgroup $G_{2s} \subset G(\B^{n+m+k})$, defined in Lemma \ref{FGL13}, is isomorphic to the dihedral group $D_8$.
\end{Lemma}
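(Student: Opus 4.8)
The plan is to exhibit $G_{2s}$ as a homomorphic image of the abstract Coxeter group of type $I_2(4)$ and then close the argument by a cardinality count. Recall that the dihedral group of order $8$ admits the Coxeter presentation
\[
D_8 \;\cong\; \langle\, a, b \;\mid\; a^2 = b^2 = (ab)^4 = e \,\rangle,
\]
i.e. it is generated by two involutions whose product has order $4$, and that this presentation defines a group of order exactly $8$. The generators $\phi$ and $\gamma$ of $G_{2s}$ are precisely such a pair of involutions, so the natural strategy is to build a surjection from the abstract group onto $G_{2s}$ and then compare orders.

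First I would verify that $\phi$ and $\gamma$ satisfy the three defining relations. Lemma \ref{FGL1} gives $\phi^2 = e$ and $\gamma^2 = e$, and Lemma \ref{FGL9} together with the Corollary following it gives $(\phi \circ \gamma)^4 = e$. By the universal property of a group presentation (von Dyck's theorem), applied with $a \mapsto \phi$ and $b \mapsto \gamma$, this data extends uniquely to a group homomorphism
\[
\psi \colon \langle\, a, b \mid a^2 = b^2 = (ab)^4 = e \,\rangle \longrightarrow G(\B^{n+m+k}), \qquad \psi(a)=\phi,\ \psi(b)=\gamma .
\]
Its image is the subgroup generated by $\phi$ and $\gamma$, which by Lemma \ref{FGL13} is exactly $G_{2s}$; hence $\psi$ is a surjection onto $G_{2s}$. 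A surjective homomorphism cannot increase cardinality, so $|G_{2s}| \leq |D_8| = 8$; but Lemma \ref{FGL13} exhibits $G_{2s}$ as a set of $8$ elements, whence $|G_{2s}| = 8$, $\psi$ is a bijection, and therefore an isomorphism $D_8 \cong G_{2s}$. As a direct cross-check of the characteristic dihedral relation, set $r = \gamma \circ \phi$ (of order $4$ by Proposition \ref{Prop1}) and $s = \phi$; then Lemma \ref{FGL1} and the above give
\[
s\,r\,s = \phi \circ (\gamma \circ \phi) \circ \phi = \phi \circ \gamma = (\gamma \circ \phi)^{-1} = r^{-1},
\]
which is exactly the relation $srs^{-1}=r^{-1}$ that, with $r^4=s^2=e$, presents $D_8$.

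The only delicate point is ensuring that $G_{2s}$ does not collapse to a proper quotient of $D_8$, i.e. that $\gamma \circ \phi$ genuinely has order $4$ rather than $1$ or $2$. This is where the earlier computational lemmas carry the real weight: Lemmas \ref{FGL4}--\ref{FGL9} track the $z$-component through the successive powers of $\gamma \circ \phi$ and establish $(\gamma \circ \phi)^4 = e$ with no smaller power acting trivially, which in turn underlies the enumeration of eight elements in Lemma \ref{FGL13}. (Implicitly this requires $f$ and $g$ to be non-degenerate enough that $\phi$, $\gamma$, and $\gamma \circ \phi$ attain their stated orders; in a degenerate case such as $f \equiv 0$ the generator $\phi$ becomes the identity and the group is strictly smaller.) Granting Lemma \ref{FGL13}, the counting argument above is immediate and the claimed isomorphism follows.
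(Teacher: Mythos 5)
Your proposal is correct, and it uses the same raw ingredients as the paper (Lemma \ref{FGL1} for $\phi^2=\gamma^2=e$, Lemma \ref{FGL9} for the order of the product, and the eight-element enumeration of Lemma \ref{FGL13}), but it routes them differently. The paper works with Ledermann's rotation--reflection presentation of $D_8$, setting $A=\phi\circ\gamma$ and $B=\gamma$, checking $A^4=B^2=(AB)^2=e$, and then simply matching the list $\{e,A,A^2,A^3,B,AB,A^2B,A^3B\}$ against the elements of $G_{2s}$; the step from ``the relations hold and both sets have eight elements'' to ``isomorphic'' is left implicit. You instead use the Coxeter presentation $\langle a,b\mid a^2=b^2=(ab)^4=e\rangle$ with $a\mapsto\phi$, $b\mapsto\gamma$, invoke von Dyck's theorem to get a surjection from the abstract group of order $8$ onto $G_{2s}$, and close by counting. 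This buys two things: it makes the no-collapse argument explicit (the surjection between two sets of size $8$ must be a bijection), and it connects directly to the Coxeter presentation of $G_{2s}$ that the paper states only afterwards, in the following subsection. You also correctly flag the one point both arguments lean on without proof: Lemma \ref{FGL13} asserts but does not verify that the eight listed elements are pairwise distinct, which fails for degenerate $f$ or $g$ (e.g.\ $f\equiv 0$ makes $\phi=e$ and collapses the group to a proper quotient of $D_8$). Granting that lemma, as the paper does, your argument is complete.
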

\begin{proof} 
Following, for example, Ledermann \cite{Ledermann} the dihedral group $D_8$ may be defined by:
\[
\{e, A, A^2, A^3, B, AB, A^2B, A^3B\}
\]
where $A$ and $B$ are such that 
\[
A^4 = B^2 = (AB)^2 = e.
\]
 With the identifications 
 \[
 A = \phi \circ \gamma \text{  and  } B = \gamma
 \]
 the conditions $A^4 = B^2 = (AB)^2 = e$ are satisfied and we have:  
\begin{eqnarray*}
\{e, A, A^2, A^3, B, AB, A^2B, A^3B\} &=& \{e, \phi \circ \gamma,  (\phi \circ \gamma)^2, (\phi \circ \gamma)^3, \gamma, \phi, \phi \circ (\gamma \circ \phi), \gamma \circ (\phi \circ \gamma)\}
\\
&=& G_{2s}.
\end{eqnarray*}
\end{proof}
We note that the group $G_{2s}$ may also be expressed as
\[
G_{2s} = \{e, \phi, \gamma, \gamma \circ \phi, (\gamma \circ \phi)^2, (\gamma \circ \phi)^3, \phi \circ (\gamma \circ \phi), \gamma \circ (\phi \circ \gamma)\}
\]
and, from the digital-computational point of view, the elements $\phi, \gamma$ and $\gamma \circ \phi$ are sufficient to represent:
\begin{enumerate}
\item the non-invertible computations defined by $f$ and $g$, and 
\item the $2$-step sequence, or functional composition, $g \circ f$
\end{enumerate}
by invertible means. However $\{\phi, \gamma, \gamma \circ \phi\}$ is not a subgroup of $G_{2s}$, and $\gamma \circ \phi$ isn't the only invertible digital sequence that may be constructed from the elements $\gamma$ and $\phi$. The group $G_{2s}$ is clearly representative of the complete set of invertible digital sequences (or invertible \lq programs') that may be constructed from the invertible components $\phi$ and $\gamma$. 
\subsection{Coxeter groups}
In terms of group presentation the group $G_{2s}$, hence also $D_8$, may be defined by:
\[
G_{2s} \equiv \langle A, B|  A^4 = B^2 = (AB)^2 = e \rangle
\]
i.e., the set of strings over $\{A, B\}$ consistent with the relations $A^4 = B^2 = (AB)^2 = e$.
The Coxeter groups have been widely studied following their introduction by H. S. M. Coxeter in 1934 \cite{Coxeter}. They have applications in many areas of mathematics and are defined by presentation. We have for $n  > 1$ in $\N$ and $1 < i, j \leq n$ the Coxeter group generated by the elements $s_1, \ldots, s_n$ has the presentation:
\[
\langle s_1, \ldots s_n  | (s_i s_j)^{m_{ij}} = e\rangle
\]
where, for all $i,j \in \{1, \ldots n\}$, we have $m_{ij} \in \N$ where $m_{ii} = 1$ and for $i \neq j$ we have $m_{ij} > 1$ and $m_{ij} = m_{ji}$

\vspace{1em}
The dihedral groups $D_{2n}$ are special cases of Coxeter groups with $n = 2$. As a Coxeter group $G_{2s}$ has the presentation:
\begin{eqnarray*}
G_{2s} &\equiv& \langle s_1, s_2| s_1s_1 = e,  s_2s_2 = e, (s_1 s_2)^4 = e \rangle,
\\
&=& \langle \phi, \gamma| \phi^2 = e, \; \gamma^2 = e, \; (\phi \circ \gamma)^4 = e \rangle,
\end{eqnarray*}
i.e., $m_{12} = m_{21} = 4$.
\section{The Coxeter groups of general $n$-step invertible digital sequences}  \label{Sect5}
\subsection{The case of $3$ sequential steps}
In the case of $3$-step invertible digital computations we have, from the general invertible process,  the $3$ independent involutions, $\phi, \gamma, \rho$, defined by: 
\begin{eqnarray*}
\phi(x, y, z, t) &=& (x, y \oplus f(x), z, t) 
\\
\gamma(x, y, z, t) &=& (x, y, z \oplus g(y), t)
\\
\rho(x, y, z, t) &=& (x, y, z, t \oplus h(z)) 
\end{eqnarray*}
from the non-invertible digital functions $f, g$ and $h$ for which the functional product $h(g(f(x)))$ is well-defined. We obtain
\begin{eqnarray*}
\rho \circ \gamma \circ \phi(x, y, z, t) &=& \rho \circ \gamma (x, y \oplus f(x), z, t)
\\
&=& \rho(x, y \oplus f(x), z \oplus g(y \oplus f(x)), t)
\\
&=& (x, y \oplus f(x), z \oplus g(y \oplus f(x)), t \oplus h(z \oplus g(y \oplus f(x))) ).
\end{eqnarray*}
and have
\[
\rho \circ \gamma \circ \phi(x, 0, 0, 0) = (x, \; f(x), \; g(f(x)), \; h(g(f(x))))
\]
i.e., the outcome of the digital sequence $h(g(f(x)))$, available in the last component of the above, is computed by invertible means. 

\vspace{1em}
It is straightforward to show that the functions $\phi, \gamma, \rho$ satisfy the relationships:
\begin{eqnarray*}
\phi^2 = \gamma^2 = \rho^2 = e, \;  (\phi \circ \gamma)^4 = e, \;  (\gamma \circ \rho)^4 = e, \; (\phi \circ \rho)^2 = e
\end{eqnarray*}
from which it follows that the group generated by $\phi, \gamma, \rho$ is the Coxeter group with presentation:
\[
G_{3s} \equiv \langle \phi, \gamma, \rho | \phi^2 = e, \gamma^2 = e, \rho^2 = e, (\phi \circ \gamma)^4 = e, (\gamma \circ \rho)^4 = e, (\phi \circ \rho)^2 = e   \rangle.
\]
\subsection{The case of $n$ sequential steps}
For the $n$-step non invertible digital sequence $x \rightarrow f_n \circ f_{n-1} \circ \cdots \circ f_2 \circ f_1(x)$ we define, from the general invertible construction, the invertible functions
$\hat{f}_1, \hat{f}_2, \ldots \hat{f}_n$ and their extensions \newline 
\[
\phi_1, \phi_2, \ldots, \phi_n
\]
such that the product 
\[
\phi_n \circ \phi_{n-1} \circ \cdots \circ \phi_1
\]
is a well-defined functional product for the computing the non-invertible sequence 
\[
f_n \circ f_{n-1} \circ \cdots \circ f_2 \circ f_1
\]
by invertible means. The group $G_{ns}$ for the $n$-step invertible digital sequence is the Coxeter group defined by the presentation: 
\[
G_{ns} \equiv \langle \phi_1, \ldots , \phi_n   | \phi_1^2 = \phi_2^2 = \cdots = \phi_n^2 = e; \; (\phi_k \circ \phi_{k+1})^4 = e; \; \text{else  } (\phi_p \circ \phi_q)^2 = e    \rangle
\]
i.e., when $q \neq p+1$ we have $(\phi_p \circ \phi_q)^2 = e$. Equivalently the presentation of $G_{ns}$ may be expressed as:  

\[
\langle \phi_1, \ldots , \phi_n   | R_1 \cup R_2 \cup R_3\rangle
\]
where
\begin{eqnarray*}
R_1 &=& \phi_1^2 = \phi_2^2 = \cdots = \phi_n^2 = e
\\
R_2 &=& (\phi_k \circ \phi_{k+1})^4 = e 
\\
R_3 &=& (\phi_p \circ \phi_q)^2 = e \;\; \text{when} \;\; |q - p| \neq 1.
\end{eqnarray*}
\section{Quantum sequential processing} \label{Sect6}
\subsection{$\B^{n+m+k}$ as  a $G_{2s}$-space}
From sections \ref{Sect3} and \ref{Sect4} we have an action of $G_{2s}$, equvalently $D_8$, on $\B^{n+m+k}$ defined by:
\begin{eqnarray*}
\phi(x, y, z) 						&=& (x, y \oplus f(x), z) 
\\
\gamma(x,y,z) 						&=& (x, y, z \oplus g(y))
\\
\gamma \circ \phi(x, y, z) 				&=& (x, y \oplus f(x), z \oplus g(y \oplus f(x)) 
\\
(\gamma \circ \phi)^2(x, y, z) 			&=& (x, y, z \oplus g(y \oplus f(x)) \oplus g(y) ) 
\\
(\gamma \circ \phi)^3(x, y, z) 			&=& (x, y \oplus f(x), z \oplus g(y))
\\
\phi \circ (\gamma \circ \phi)(x, y, z) 		&=& (x, y, z \oplus g(y \oplus f(x)))
\\
\gamma \circ (\phi \circ \gamma)(x, y, z) 	&=& (x, y \oplus f(x), (z \oplus g(y)) \oplus g(y \oplus f(x))).
\end{eqnarray*}
From this we define corresponding actions on the basis elements: 
\[
B_{\otimes^{n+m+k}\C^2} = \{|x\rangle |y\rangle |z\rangle: x \in \B^n, y \in \B^m, z \in \B^k\}
\]
of $\otimes^{n+m+k}\C^2$ by 
\begin{eqnarray*}
U_\phi |x\rangle |y\rangle |z \rangle &=& |x\rangle |y \oplus f(x)\rangle |z\rangle
\\
U_\gamma |x\rangle |y\rangle |z \rangle &=& |x\rangle |y\rangle |z \oplus g(y)\rangle
\\
U_{\gamma \circ \phi}|x\rangle|y\rangle|z\rangle &=& |x\rangle|y \oplus f(x) \rangle |z \oplus g(y \oplus f(x)) \rangle
\\
&\vdots&
\\
U_{\gamma \circ (\phi \circ \gamma)}|x\rangle |y\rangle |z\rangle &=& |x\rangle |y \oplus f(x)\rangle |(z \oplus g(y)) \oplus g(y \oplus f(x))\rangle
\end{eqnarray*}
the linear extensions of which, from $B_{\otimes^{n+m+k}\C^2}$  to the whole of $\otimes^{n+m+k}\C^2$, determine unitary operators. 

\vspace{1em}
We note that:
\[
U_{\gamma \circ \phi}|x\rangle|0\rangle|0\rangle = |x\rangle|f(x) \rangle | g(f(x)) \rangle
\]
which is a quantum implementation of the digital sequence $g(f(x))$. The value of $g(f(x))$ being obtained from $U_{\gamma \circ \phi}|x\rangle|0\rangle|0\rangle$ by measurement of the $k$ qubits $|g(f(x))\rangle$. 

\vspace{1em}
From the above we also have
\[
U_{\gamma \circ \phi} = U_{\gamma} U_{\phi}
\]
and
\begin{Proposition}
For $\psi \in G_{2s}$ the mapping $\psi \rightarrow U_\psi$ is a unitary representation of the group $G_{2s}$ in the unitary group of the tensor-product space $\otimes^{n + m + k} \C^2$.
\end{Proposition}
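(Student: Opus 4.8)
The plan is to recognise the assignment $\psi \mapsto U_\psi$ as the permutation representation attached to the action of $G_{2s}$ on the finite set $\B^{n+m+k}$, and then to check the two defining properties of a unitary representation: that each $U_\psi$ is unitary, and that $\psi \mapsto U_\psi$ is a group homomorphism. First I would record the single uniform rule underlying all the displayed formulas. Writing $w = (x,y,z) \in \B^{n+m+k}$ and abbreviating $|x\rangle|y\rangle|z\rangle$ by $|w\rangle$, each line of the preceding display is simply the instance, for the relevant $\psi \in G_{2s}$, of
\[
U_\psi|w\rangle = |\psi(w)\rangle,
\]
extended linearly from $B_{\otimes^{n+m+k}\C^2}$ to the whole space. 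This definition is legitimate because, by Lemma \ref{FGL1} and the subsequent results, every $\psi \in G_{2s}$ lies in $G(\B^{n+m+k})$ and is therefore a bijection of the index set $\B^{n+m+k}$; hence $w \mapsto \psi(w)$ permutes the orthonormal basis, and the linear extension to a well-defined operator is unambiguous.

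Next I would establish unitarity. Since $\psi$ merely permutes the orthonormal basis $B_{\otimes^{n+m+k}\C^2}$, the operator $U_\psi$ carries this basis bijectively to itself, so it preserves inner products on basis vectors and hence, by sesquilinearity, on the whole of $\otimes^{n+m+k}\C^2$. Equivalently $U_\psi$ is a permutation matrix and is therefore unitary, with $U_\psi^{-1} = U_{\psi^{-1}} = U_\psi^{\dagger}$. This handles the membership of each $U_\psi$ in the unitary group of the tensor-product space.

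Finally, the homomorphism property. For $\psi_1, \psi_2 \in G_{2s}$ and any basis vector $|w\rangle$,
\[
U_{\psi_1 \circ \psi_2}|w\rangle = |(\psi_1 \circ \psi_2)(w)\rangle = |\psi_1(\psi_2(w))\rangle = U_{\psi_1}|\psi_2(w)\rangle = U_{\psi_1}U_{\psi_2}|w\rangle,
\]
and since the two linear operators agree on a basis they coincide, giving $U_{\psi_1 \circ \psi_2} = U_{\psi_1}U_{\psi_2}$. The relation $U_{\gamma \circ \phi} = U_\gamma U_\phi$ noted just above the statement is precisely this identity. Together with unitarity, this exhibits $\psi \mapsto U_\psi$ as a homomorphism of $G_{2s}$ into the unitary group of $\otimes^{n+m+k}\C^2$, i.e.\ a unitary representation.

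I expect no genuine obstacle here: the content is entirely the functoriality of the permutation representation of a group action. The single point requiring care is the ordering convention one must confirm that composition in $G_{2s}$ (in $\gamma \circ \phi$ the map $\phi$ acts first) matches operator multiplication (in $U_\gamma U_\phi$ the factor $U_\phi$ acts first), which the displayed computation verifies. As an optional confirmation that the representation is genuinely of the Coxeter group $G_{2s}$ and not merely of some quotient, one may check directly that the defining relations $U_\phi^2 = U_\gamma^2 = (U_\phi U_\gamma)^4 = I$ hold for the operators; these are immediate once the homomorphism property is in hand, since $\phi^2 = \gamma^2 = (\phi \circ \gamma)^4 = e$ in $G_{2s}$.
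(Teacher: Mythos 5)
Your proof is correct and complete. The paper itself offers no proof of this Proposition --- it is stated bare, supported only by the preceding remarks that the linear extensions of $|x\rangle|y\rangle|z\rangle \mapsto |\psi(x,y,z)\rangle$ determine unitary operators and that $U_{\gamma\circ\phi}=U_\gamma U_\phi$ --- and your argument is exactly the formalisation of those remarks: you identify $U_\psi|w\rangle=|\psi(w)\rangle$ as the permutation representation of the action of $G_{2s}\subset G(\B^{n+m+k})$ on the orthonormal basis, from which unitarity and the homomorphism property (with the correct ordering convention) follow, so you have supplied the missing proof along the paper's intended lines rather than diverging from it.
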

The quantisation of $n$-step digital sequences leads to similar conclusions.
\section{Conclusion} \label{Sect7}
Coxeter groups have been shown to occur in both high-level, $n$-step, invertible digital sequential processing and in a quantum representation of sequential processing. The groups, in both cases, have been shown to be representative of the set of all invertible processes that may be constructed from the given set of processing steps. 

\vspace{1em}
It has also been shown that, in the quantum case, particular unitary representations of the Coxeter groups occur; the general nature and properties of these representations is not known to the author. 

\end{document}